\theoremstyle{plain}
  \theoremstyle{definition}
  \theoremstyle{plain}
 \theoremstyle{definition}
 \newtheorem*{defn*}{Definition}
\begin{document}

\title{{\Large \bf Incentive Compatible Influence Maximization in Social
Networks and Application to Viral Marketing}}

\author{Mayur Mohite%
\thanks{Department of Computer Science and Automation, Indian Institute
of Science, mail - mayur@csa.iisc.ernet.in%
} ~and~Y. Narahari%
\thanks{Professor, Department of Computer Science and Automation, Indian Institute of
Science, mail - hari@csa.iisc.ernet.in%
}~~}
\maketitle
\begin{abstract}
Information diffusion and influence maximization are important
and extensively studied problems in social networks. 
Various models and algorithms have been proposed in the literature
in the context of the influence maximization problem. 
A crucial assumption in all these studies is 
that the influence probabilities are known to the social planner. 
This assumption is unrealistic since the influence probabilities
are usually private information of the individual agents and
strategic agents may not reveal them truthfully.
Moreover, the influence probabilities could vary significantly with the type 
of the information flowing in the network and the time at which the information 
is propagating in the network. 
In this paper, we use a mechanism
design approach  to elicit influence probabilities
truthfully from the agents. 
We first work with a simple model, the {\em influencer model\/}, where we
assume that each user knows the level of influence she has on her
neighbors but this is private information. In the second model,
the {\em influencer-influencee model\/}, which is more realistic, we
determine influence probabilities by combining the probability values
reported by the influencers and influencees.
In the context of the first model, we present how  
VCG  (Vickrey-Clarke-Groves) mechanisms could be used for truthfully
eliciting the influence probabilities.
Our main contribution is to design 
a scoring rule based mechanism in the context of the
influencer-influencee model. In particular, we show the incentive
compatibility of the mechanisms when the scoring rules are proper
and propose a reverse weighted scoring rule based mechanism as
an appropriate mechanism to use. We also discuss briefly the
implementation of such a mechanism in viral marketing applications.
\end{abstract}

\section{Introduction}
Social networks are widespread and indeed provide an
effective medium to propagate information and to market
and advertise products.
Examples of online social networks include 
facebook, twitter, linked in, orkut, etc. 
A social network, in a natural way, could be represented in the form of  a graph in 
which an individual is represented as a
node and there is an edge between two individuals if they are associated
with each other. 

Consider a situation in which a company has designed a new gaming
console which it wants to market on a social network. As a marketing
strategy, the company can select a small set of influential
users to whom it provides the product for free. If these users like the product, 
then they will recommend it to their friends. These friends could 
get influenced by the users and
will perhaps buy the product. Some of  these friends
will in turn recommend the product to their
friends, leading to a marketing cascade. 
An online social network is an effective medium
for launching such a marketing campaign because it has much information
about the users as well as the relationship graph of users. 
The choice of an initial set of users is critical here because these
initially selected users will decide the expected number of users that 
would get influenced ultimately.
An important problem in this context is the {\em influence maximization problem\/} - 
given a social network graph and influence probabilities
on each edge, how do we select a small set of initial users so as
to maximize the number of users who get influenced. This problem has been
solved in the literature as an optimization problem by assuming a suitable model
of information diffusion.
Various algorithms have been proposed to find
the set of influential nodes in a social network efficiently
\cite{KKT,Chen}. 

The current models for information diffusion process assume that
the social planner knows the influence probabilities accurately.
There are many reasons why this assumption is not true in practice.
Consider an example, say a seller wants to sell a newly released book
by some author and another seller wants to sell a tennis racket. Both
the sellers decide to use viral marketing on a popular online social
network to market these products. To do this the seller will have
to find the values of influence probabilities on each edge and find 
the influential set of nodes to initiate the cascade process. 
Consider a typical user $u$ and her set of neighbors on the social network.
See Figure 1.
The influence probability of $u$ on each neighbor for a product
say book need not be the same for another product say tennis racket.
This is perhaps because $u$ is a novice tennis player but is good at
English literature and her friends know this, however, this particular
information is not available on this social networking site. Hence
her friends will be influenced more by her recommendation of a novel
to them than a tennis racket. 

\begin{figure}
\includegraphics[scale=0.35]{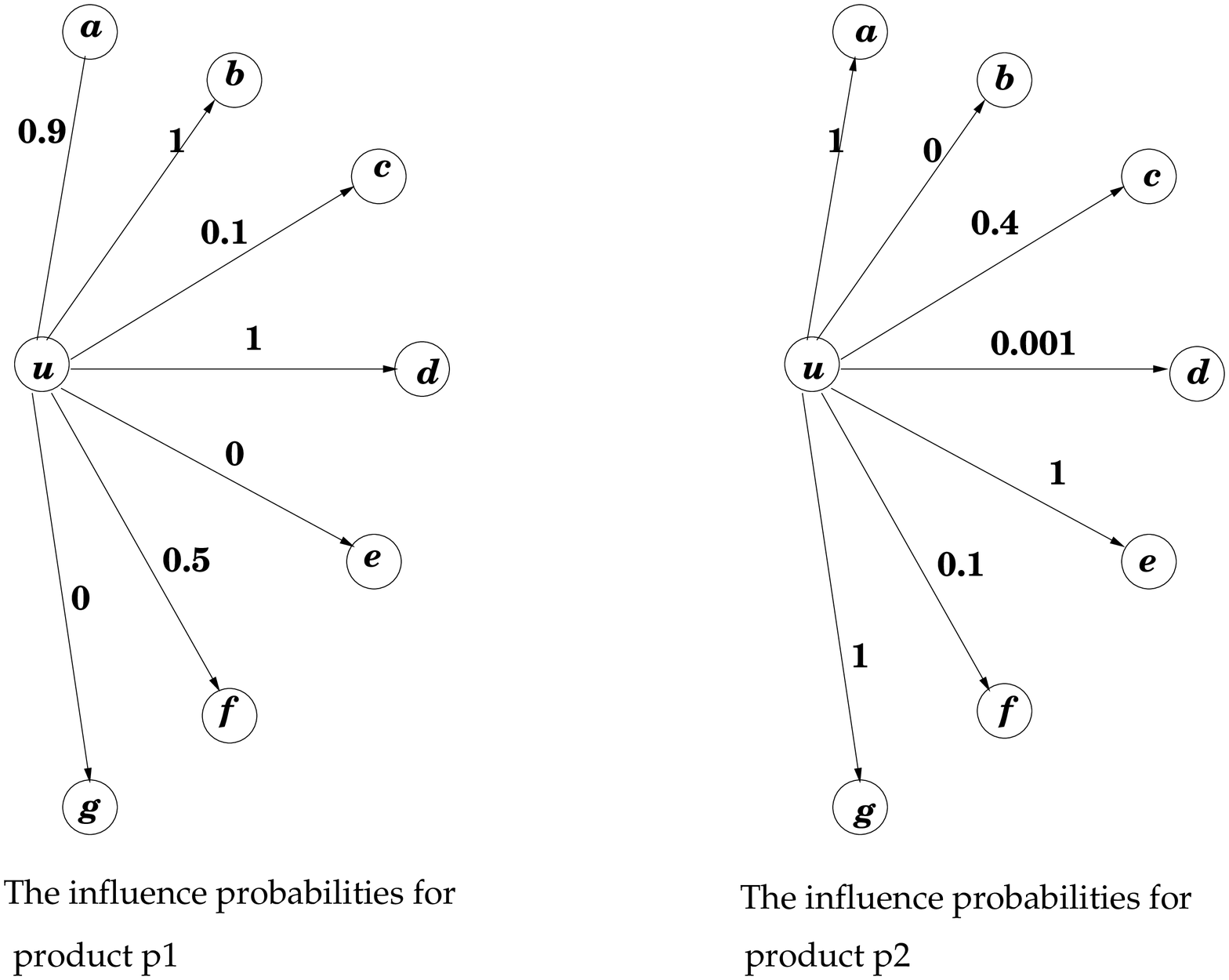}

\caption{Variation of influence probabilities with product}

\end{figure}

Thus, the influence probabilities can vary drastically with the product
being marketed, the time at which the recommendation is made, and
perhaps many other factors. Building robust and accurate models for
estimating the influence probabilities from the currently available
data in  online social networks is a difficult task. The best way to
know the influence probabilities accurately is to elicit them
truthfully from the users themselves.
However, the users need not reveal the influence probabilities 
truthfully due to strategic reasons.  For example, a user might
misreport by projecting a higher level of influence on her friends 
in order to become a part of the initially active set.

The success of a viral marketing strategy or in general
influence maximization is critically dependent on the initially 
chosen target set.  This in turn depends on the accuracy
of the influence probabilities. More generally, predicting
information diffusion in a social network critically depends
on knowing the influence probabilities. 

In this paper we address the influence maximization problem in an incomplete
information setting in which, influence probabilities are the private
information.  Our objective in this paper is to extract influence probabilities 
accurately using a mechanism design approach.

\subsection{Relevant Work}
Kempe, Kleinberg, and Tardos in \cite{KKT} considered the algorithmic
problem of influence maximization proposed by Domingos and Richardson
in \cite{Domingos}. In this paper they proved that this problem is NP-hard
even for simple models of information diffusion and for some of the
more complex models, it is not even constant factor approximable.
They gave a constant factor approximation algorithm for the 
independent cascade model by proving the sub-modularity of the influence
function. The greedy algorithm they propose assumes that the
influence probabilities are available to the algorithm.
There are a number of algorithms proposed in the context of influence
maximization in the recent years; we mention only two papers here:
(a) Leskovec, Krause, and Guestrin \cite{LESKOVEC}
(b) Chen, Wang, and Yang \cite{Chen}. These two papers also contain
a rich set of related references.

Alon, Fischer, Procaccia, and Tennenholtz in \cite{Procaccia} proposed
a game theoretic model for truthfully choosing the agents that maximize
the sum of indegrees in a directed graph. In the model they propose,
they consider the outdegree of an agent as private information. The
objective of each agent is to be among the set of nodes chosen by
the algorithm. 
They propose several deterministic
and randomized strategy-proof algorithms to achieve the objective
of maximizing the sum of indegrees. The problem that we address in
the present paper is different and more general.
The objective of each agent in \cite{Procaccia} is to maximize the 
number of neighbors that she is able to activate in the 0-1 cascade process. 
In our problem, on the other hand,
the objective is to choose a set of agents that have maximum reachability.

A mechanism design based framework to extract the information from
the agents has been proposed for ranking systems \cite{Altman}.
The authors study incentives in ranking systems, where agents try to
maximize their position in the ranking, rather than to obtain a correct outcome. 
They consider several basic properties of ranking systems and  characterize 
the conditions under which incentive compatible ranking systems exist. 
They show  that in general no such system satisfying all the properties exists.

Dixit and Narahari in \cite{Dixit} proposed query
incentive networks in which the nodes, along with the answer are aware
of the quality of the answer. They proposed a game theoretic model
of query incentive networks in which the quality of answer is the
private information. They designed a scoring rule based mechanism
to truthfully extract the quality of the answers from the agents along
with the actual answer. In our work, we design the influencer-influencee
model which is similar to the game theoretic model presented in \cite{Dixit}
for query incentive networks. In \cite{Dixit}
the rewards to the agents depend on the truthfulness of the quality
of answers they report. In our problem, the payments to the agents
depend on the truthfulness of the influence probabilities they report.

In the work by Goyal, Bonchi, and
Lakshmanan in \cite{Goyal}, the approach is to use a machine
learning based approach for predicting the influence
probabilities in social networks. Intuitively, the approach they consider
is that if an individual $x$ takes a total of $a$ actions out of
$b\leq a$ actions were performed by its neighbor $y$
before $x$, then there is a probability of $\frac{b}{a}$ that person
$x$ will be influenced by $y$ in future. Here the {}``action''
is the act of joining a community or group in a social network which
does not involve any effort or monetary transfer. 
They validate the models they build on a real world data set.

To the best our knowledge, the model presented in this paper is the
first one that captures strategic behavior of agents in the information
diffusion process. Using this model, our aim is to elicit the true
influence probabilities from the agents in order to accurately compute
the set of highly influential nodes or predict the progress of
information cascades.

\subsection{Contributions and Outline}
In this paper, we design mechanisms to extract influence probabilities
truthfully from the users of a social network. 
We first work with a simple model, the {\em influencer model\/}, where we
assume that each user knows the level of influence she has on her
neighbors but this is private information. In the second model,
the {\em influencer-influencee model\/}, which is more realistic, we
determine influence probabilities by combining the probability values
reported by the influencers and influencees.

\subsubsection*{The Influencer Model}
First, we develop a game theoretic model for information diffusion
process in which we ask only the influencer to reveal the influence
probabilities on all outgoing edges. 
We propose a VCG  (Vickrey-Clarke-Groves) mechanism \cite{MASCOLELL} based approach.
We show that, without using money or any payment scheme the ideal influence maximizing algorithm
may not be incentive compatible.  Then we show how a VCG mechanism based approach could be used.

\subsubsection*{The Influencer - Influencee Model}
In this more general model, given
an edge in the social network, we ask the influencer as well as the
influencee to reveal the influence probability on the edge. This model
is more realistic since, both the persons involved in a connection,
will have information about the influence probability. 
We design a payment scheme
in which we use scoring rules \cite{Selten}
to design the payment scheme. We show
that it is a Nash equilibrium to report true influence probability
in this mechanism. We also design the reverse weighted scoring rule
(derived from the weighted scoring rule)  which has several
desirable properties which, standard scoring rules like the quadratic
and spherical scoring rules do not possess. 

\subsubsection*{Outline of the Paper}
The rest of the paper is organized as follows.
In the next section (Section 2), we provide essential preliminaries. 
We discuss the {\em influencer model\/} in Section 3 and the 
{\em influencer-influencee model\/}
in Section 4. In Section 5, we briefly discuss how the influencer-influencee
model could be implemented in a typical viral marketing scenario.

\section{Preliminaries}
Several probabilistic models have been proposed to model the spread
of information in social networks. We give here a
brief overview of the \emph{Independent Cascade Model}. We will represent
any social network by directed graph $G$. We will say that an individual
(a node in the graph) is active if she is the adopter of the innovation or the behavior
and inactive otherwise. We will assume that once an individual becomes
active, she cannot switch back to being inactive.

\subsection{Independent Cascade Model}

Let $V = \{1,2, \ldots, n\}$ be the set of nodes in the social
network.
In this model we activate some subset of nodes $A$ initially. The
information diffusion process unfolds in discrete time steps 
($x = 1, 2, \ldots, $) as follows.
Each node that becomes active at time step $x$ will try to activate
each of its neighbors. A node will get the chance to activate its
neighbors only once, that is at the time instant in which it becomes
active. A node $i$ will successfully activate its neighbor $j$ with
probability $p_{ij}$. Thus $p_{ij}$ is the probability that node
$i$ will activate node $j$ conditioned on the event that node $j$
is inactive when node $i$ got activated. The successfully activated
nodes at time instant $x$ can now activate their inactive neighbors at time
instant $x+1$. This process ends when no more nodes get activated.
Clearly this process will end in at most $n-1$ time steps, where $n$ is
the number of nodes.

\subsection{Influence Maximization Problem}

First we define the notion of an influence function.
Given an initially active set $A$ that is a subset of $V$ and influence probabilities, the
influence function denoted by $\sigma(A)$ is the expected number
of active nodes at the end of the diffusion process. 
The influence maximization problem is, given a parameter $k$, a social
network graph $G$, and a model of information diffusion, to find a set
of $k$ nodes in $G$ to be activated initially (also known as target
set) such that, it will maximize the influence function $\sigma(A)$.

\subsection{Scoring Rules}
%
A scoring rule \cite{Selten}  is a sequence of scoring functions,
$ S_1, S_2, \ldots , S_t$, such that $S_i$ assigns a score $S_i(z)$
to every $z \in \Delta(T)$ where $\Delta(T)$ denotes the set of probability
distributions on the set $T = \{1,2, \ldots, t\}$. 
Note that $z = (z_1, z_2, \ldots, z_t)$ with $z_i \geq 0$ for $i=1,2, \ldots, t$ and
$\sum_{i=1}^{t} z_i = 1$.
We will consider only real valued scoring rules. Scoring rules are primarily
used for comparing the predicted distribution with the true observed one.
Suppose $w \in \Delta(T)$ is the true observed distribution.
Then the \textit{expected score} of any general distribution
$z \in \Delta(T)$ against $w$ is defined as 
\[V(z|w) = \sum_{i=1}^t w_iS_i(z)\].
The \textit{expected score loss} is defined as 
\[L(z|w) = V(w|w) - V(z|w)\]
A scoring rule $S_1, S_2, \ldots, S_t$ is called \textit{proper} or \textit{incentive compatible}
if $\forall z,w \in \Delta(T)$ with $z \neq w$, $L(z|w) > 0$.
If the scoring rule is proper, then
it is a best response for each agent to report its true probability
distribution. 
The following are  popular proper scoring rules discussed in the literature \cite{Selten}:
\begin{itemize}
\item Quadratic scoring rule: \[S_i(z_1, \ldots, z_i, \ldots, z_t) = 2z_i - \sum_{j=1}^t z_j^2\]
\item Logarithmic scoring rule: \[S_i(z_1, \ldots, z_i, \ldots, z_t) = ln ~~ z_i \]
\item Spherical scoring rule: \[S_i(z_1, \ldots, z_i, \ldots, z_t) = \frac{z_i}{\sqrt{\sum_{j=1}^t z_j^2}}\]
\item Weighted scoring rule
\[S_{i}(z)=\frac{2i\cdot z_{i}-\sum_{j=1}^{t}z_{j}^{2}\cdot j}{t}\]
\end{itemize}

\section{The Influencer Model}
In this model we assume that only the influencer knows the
influence probabilities and only the influencer is asked to report
the probability values. The model is as follows. 

\begin{itemize}
\item The social planner has the entire graph structure $(V,E)$
of the social network.  The players in the game $V=\{1, \ldots, n\}$ are the users of the social
network.
\item 
Here each agent $i$ has the influence probability vector $\theta_{i}\in[0,1]^{\left|E\right|}$as
her private information. The $jth$ component of $\theta_{i}$ will
give the influence probability of $i$ on node $j$. 
Let $out(i)=\{j\in V|(i,j)\in E\}$ denote the set of successors of node $i$.
The social planner does not
know anything about the influence probability of node $i$ on its
successors.  For all other nodes,
(non-neighbors of node $i$), the influence probability will be
zero and this is known to the social planner as the social planner
has the structure of the graph. 
Prior to starting the information diffusion
process, the social
planner asks each agent to report her influence probabilities. 
The reports from the agents may or may not be truthful.
\item Given the reported influence probabilities, the social planner now
computes the target set using an influence maximization algorithm.
Let this target set be $A$. 
\item Let $\theta\in[0,1]^{\left|E\right|}$ be the true influence probability
vector, representing the influence probability on each edge of the
graph. Then the utility of a player when the social planner chooses
a target set $A$ is  the expected number of neighbors
activated by that player. Also, without involving payments, the valuation
function for each agent is equal to its utility that is, \[
u_{i}(\theta,A)=v_{i}(\theta,A)\]
Thus utility is proportional to the expected number of neighbors an agent is able to activate, 
given the target set. 
\end{itemize}
Here the valuation function represents the preferences of the agents
over the target set chosen by the social planner.

\subsection{A VCG Mechanism Based Approach}
Consider the exact influence maximization problem which optimizes
the influence function. It has been proven in \cite{KKT} that this problem
is NP-hard. Even if we are given that the algorithm to choose
the target set finds the optimal solution,  we
can come up with an example in which the agents would
prefer to lie about their preferences and still be better off. One such example is given
in the next subsection. We can make this algorithm incentive compatible
by introducing appropriate incentives to the agents. 
An immediate and natural approach is to use the
VCG (Vickrey-Clarke-Groves) mechanisms \cite{MASCOLELL}. We shall, in particular,
explore the use of the Clarke payment rule \cite{CLARKE}
to make this mechanism incentive compatible. 
The utility of the agents with Clarke payments  will take the form:\[
u_{i}(\theta,A)=v_{i}(A)+p_{i}(v_{1},v_{2},...,v_{n})\]
where $p_{i}$ is the discount offered to agent $i$ by the social
planner. 

In order to design a VCG mechanism, we first need to prove that the social choice function 
is allocatively efficient.  Note that, in our framework, the social choice function 
is precisely the algorithm being used to choose the target set.
We will first prove the following useful lemma which will immediately
imply that the exact influence maximization algorithm is allocatively
efficient. To prove the lemma, we will first describe an equivalent
view of the independent cascade model given by Kempe, Kleinberg, and
Tardos in \cite{KKT}. 

The influence probability of node $i$ on $j$ denoted by $\theta_{ij}$
gives the probability that node $i$ will activate $j$ given that
node $j$ will be inactive at the instant when node $i$ becomes active.
This event can be viewed as the flip of a biased coin. In the process,
let us say we flip the coins on all the edges before the start of the cascade
process and check the result of the coin flip only when a node becomes
active and its neighboring node is inactive. This change will not
affect the final result and it is equivalent to the original cascade
process. We call the edges on which the coin flip resulted in heads
as live edges and the remaining edges as blocked. Given this equivalent
view, if we fix the outcomes of all coin flips and initially active
set of nodes $A$, then we will get a graph in which some edges are
live and the rest are blocked, depending on the outcome. Clearly in this
graph, if we run the cascade process, then the number of nodes that
are active will be the number of nodes that are reachable from set
$A$ on a path that consists of only live edges. 

Thus we will consider a sample space $S$ in which each sample point
corresponds to one possible outcome of all the coin flips. If $X$
denotes one such fixed outcome of coin flips, we define $\sigma_{X}(A)$ to be the number
of active nodes at the end of the cascade process for the fixed outcome
$X$ and target set $A$. Then $\sigma(A)$, the expected number of active nodes at the end of the
cascade process, is given by:\[
\sigma(A)=\sum_{X\in S}P[X]\sigma_{X}(A)\]
Given this formula for $\sigma(A)$, we can now prove the following
lemma.
\newtheorem{lemma}{lemma}
\begin{lemma}
Given a target set $A$, then\[
\sigma(A)=\sum_{i=1}^{n}v_{i}(\theta,A)+\left|A\right|\]
where $v_{i}$ is the valuation function of agent $i$ which is equal
to the expected number of neighbors activated by that agent.\end{lemma}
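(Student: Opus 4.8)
The plan is to prove the identity pointwise over the sample space $S$ of coin-flip outcomes and then lift it to expectations by linearity. Recall from the live-edge reformulation that $\sigma(A)=\sum_{X\in S}P[X]\,\sigma_{X}(A)$. In exactly the same way, the valuation of agent $i$ can be written as $v_{i}(\theta,A)=\sum_{X\in S}P[X]\,v_{i}^{X}(A)$, where $v_{i}^{X}(A)$ denotes the number of neighbors of $i$ that $i$ activates in the fixed outcome $X$. Since $|A|$ does not depend on $X$ and $\sum_{X\in S}P[X]=1$, it suffices to establish the deterministic identity
\[
\sigma_{X}(A)=\sum_{i=1}^{n}v_{i}^{X}(A)+|A|\qquad\text{for every }X\in S,
\]
and then multiply by $P[X]$ and sum.

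First I would fix an outcome $X$, which determines the live/blocked status of every edge. By the equivalent view, the set of active nodes at the end of the process is exactly the set of nodes reachable from $A$ along live edges, so $\sigma_{X}(A)$ is the size of this reachable set. I would then partition the active nodes into the initially active nodes (there are $|A|$ of them) and the newly activated nodes. The heart of the argument is a charging scheme: I would show that every newly activated node $j$ is activated by exactly one agent, so that assigning each such $j$ to its unique activator gives $\sum_{i}v_{i}^{X}(A)=(\text{number of newly activated nodes})=\sigma_{X}(A)-|A|$, which is the desired identity.

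The hard part will be making the activator of a node well defined. A node $j\notin A$ first becomes active at the time step right after the earliest moment at which some in-neighbor of $j$ joins the active set along a live edge into $j$; if several such in-neighbors become active at that same step, each of them succeeds in activating $j$, and a naive count would credit $j$ to all of them and overcount. To handle this I would fix a deterministic tie-breaking rule (for instance, attribute $j$ to the live in-neighbor of smallest index among those that become active at the earliest such step) and define $v_{i}^{X}(A)$ accordingly as the number of neighbors for which $i$ is the designated activator. With this convention each newly activated node is charged to exactly one agent, the charging map is well defined, and no double counting occurs.

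Finally I would assemble the pieces. The pointwise identity $\sigma_{X}(A)=\sum_{i}v_{i}^{X}(A)+|A|$ holds for each $X$, so multiplying by $P[X]$, summing over $X\in S$, and using $\sigma(A)=\sum_{X}P[X]\,\sigma_{X}(A)$ together with $v_{i}(\theta,A)=\sum_{X}P[X]\,v_{i}^{X}(A)$ and $\sum_{X}P[X]=1$ yields $\sigma(A)=\sum_{i=1}^{n}v_{i}(\theta,A)+|A|$, completing the proof.
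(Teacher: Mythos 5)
Your proposal is correct and follows essentially the same route as the paper: both use the live-edge equivalent view, prove the identity $\sigma_{X}(A)=\sum_{i}v_{i}^{X}(A)+|A|$ pointwise for each coin-flip outcome $X$ by charging every newly activated node to a unique activator via a deterministic tie-breaking rule, and then lift to expectations by linearity. Your ``earliest activation step'' criterion for the candidate activators coincides with the paper's shortest-path set $R_{Av}$ (since in the live-edge view a node's activation time equals its live-edge distance from $A$), and your smallest-index tie-break plays the same role as the paper's lexicographic ordering.
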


\begin{proof}
Fix a sample point $X$ from the sample space $S$ of all possible
coin flips on the edges. Consider an arbitrary node $i$ in the
graph. Let $v_{i_{X}}(A)$ be the number of neighbors activated by
node $i$ for outcome $X$. More concretely, let us define $d(A,v)$ where
$A\subseteq V$ and $v\in V$ as the shortest path distance between
$A$ and a node $v$. Thus, $d(A,v)=0$ if $v\in A$. Let, $R_{Av}=\{u\in V|(u,v)\in 
E_{X} \wedge d(A,u)+1=d(A,v)\}$
where, $E_{X}$ is the edge set that is active for the outcome $X$.
$R_{Av}$ is the set of nodes that are lying on the shortest path
from set $A$ to node $v$. Thus, we define \[
v_{i_{X}}(A)=\left|\{v\in N(i)|i\in R_{Av}\wedge i\succeq j\,\forall j\in R_{Av}\}\right|\]
where, the ordering $\succeq$ is lexicographic (we  are breaking
the ties in favor of the node with the highest lexicographic order).
Also lexicographic ordering ensures that a node is activated deterministically by exactly
one node for a fixed outcome $X$.  Then we have, 
\[ \sigma_{X}(A)=\sum_{i=1}^{n}v_{i_{X}}(A)+\left|A\right|\]
Also we can see that 
\[ \sum_{i=1}^{n}v_{i}(\theta,A)+\left|A\right|=\sum_{X\in S}P[X]\{\sum_{i=1}^{n}v_{i_{X}}(A)+\left|A\right|\}\]
Since $\sum_{X\in S}P[X]=1$, we have
\[
\sum_{i=1}^{n}v_{i}(\theta,A)+\left|A\right|=\sum_{X\in S}P[X]\sigma_{X}(A)\]
This implies
\[
\sum_{i=1}^{n}v_{i}(\theta,A)+\left|A\right|=\sigma(A)\]
\end{proof} 
Thus, by using Lemma 1, we can say that an influence maximization
algorithm that finds the exact optimal solution is allocatively efficient and hence VCG payments will give
us strategy proof mechanism for this algorithm \cite{MASCOLELL}. Thus with VCG payments,
the utility for each agent will be :\[
u_{i}(\theta,A)=v_{i}(A)+\sum_{j\neq i}v_{j}(A)-h(v_{1},...v_{i-1},v_{i+1},...v_{n})\]
Here $h$ is some function independent of $v_{i}$. Thus we have the
following result:
\newtheorem{theorem}{Theorem}
\begin{theorem}
The influence maximization algorithm that finds an exact optimal
solution is allocatively efficient
and hence is dominant strategy incentive compatible under VCG payments.
\end{theorem}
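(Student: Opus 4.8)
The plan is to prove the two assertions in order: first that the exact influence maximization algorithm is allocatively efficient, and then to obtain dominant strategy incentive compatibility by appealing to the standard VCG (Groves) characterization. The entire first part will rest on Lemma 1, which has already been established, so no new combinatorial work on the cascade process is needed here.

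First I would recall the definition of allocative efficiency in this quasilinear setting: the social choice function, which is precisely the target-set selection algorithm, is allocatively efficient if for every reported type profile $\theta$ the outcome $A$ it returns lies in $\arg\max_{A} \sum_{i=1}^{n} v_i(\theta, A)$, where the maximization ranges over the feasible outcomes. In the influence maximization problem the feasible outcomes are exactly the subsets of $V$ of the fixed cardinality $k$, and this restriction is what the argument will exploit.

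Next I would invoke Lemma 1, which gives $\sigma(A) = \sum_{i=1}^{n} v_i(\theta, A) + |A|$. The key observation is that over the feasible set the term $|A| = k$ is a constant that does not depend on which particular $k$-subset is chosen. Hence for any two feasible sets $A$ and $A'$ one has $\sigma(A) \geq \sigma(A')$ if and only if $\sum_{i} v_i(\theta, A) \geq \sum_{i} v_i(\theta, A')$, so the two objectives induce the same ordering on feasible outcomes and therefore share the same set of maximizers. Consequently an algorithm that returns an exact maximizer of $\sigma$ also returns a maximizer of the total valuation $\sum_i v_i(\theta, \cdot)$, which is exactly the statement of allocative efficiency.

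Finally I would close by citing the standard mechanism design result \cite{MASCOLELL}: in a quasilinear environment any allocatively efficient social choice function can be implemented in dominant strategies by augmenting it with Groves payments, the Clarke pivot rule being one canonical choice for the agent-independent term $h$ appearing in the utility expression $u_i(\theta,A)=v_i(A)+\sum_{j\neq i}v_j(A)-h(v_1,\dots,v_{i-1},v_{i+1},\dots,v_n)$. Combining this with the allocative efficiency just shown yields the theorem. I expect the only genuinely delicate point to be the handling of the additive $+|A|$ term from Lemma 1: the argument that it does not perturb the $\arg\max$ hinges on the cardinality constraint $|A| = k$ that defines the feasible outcome set, and this must be stated explicitly, since without a fixed $k$ the constant would not drop out and allocative efficiency with respect to $\sum_i v_i$ would no longer be equivalent to maximizing $\sigma$.
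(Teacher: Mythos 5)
Your proof takes essentially the same route as the paper: allocative efficiency is obtained from Lemma 1's decomposition $\sigma(A)=\sum_{i=1}^{n}v_{i}(\theta,A)+\left|A\right|$, and dominant strategy incentive compatibility then follows by citing the standard Groves/Clarke result of \cite{MASCOLELL}. If anything, you are more careful than the paper, which invokes Lemma 1 in a single sentence and never explicitly notes that the additive $\left|A\right|$ term drops out of the $\arg\max$ only because all feasible target sets have the same fixed cardinality $k$.
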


\begin{figure}
\caption{A stylized social network}
~~~

~~~~~~~~~~~~~~~\includegraphics{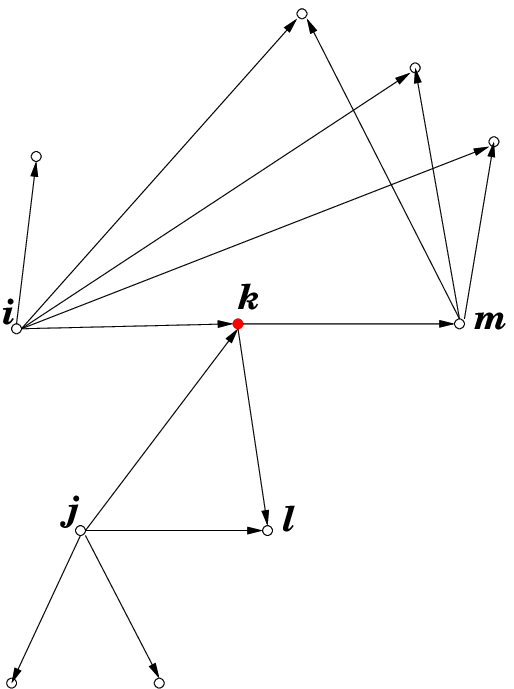}
\end{figure}

\subsection{An Illustrative Example}
We now provide a simple example to illustrate how the proposed model
functions. 
Consider a simple social network graph as shown in Figure
2.
Assume that the true influence probabilities are all $1$ in the graph
and the algorithm for target set selection is an exact influence
maximizing algorithm. Note that all the agents in the network have
complete information about the network. If we want to choose only one node as the target
set, then clearly node $j$ will be chosen, because $\sigma(\{j\})=8$,
which represents the maximum influence among all nodes present, as can be
seen from the figure. Now assume that, all nodes except node $k$
report their true influence probabilities. Consider node $k$, if this node reports
its true influence, then node $j$ will be chosen as target set and
its utility will be $1$, because she will only be able to influence
node $m$. This is because when the cascade reaches node $k$ at time
$x=1$, then till that time its neighbor node $l$ would have already been
influenced by node $j$ at time $t=0$. Now if node $k$ lies about
its influence on node $m$ as $0$, then node $i$ will be chosen
as the influence maximizing target set. For this target set, the utility
of agent $k$ will be $2$ but now the influence function
will be $\sigma(\{i\})=7$.
Thus agent $k$ is better off by lying rather than reporting the truth.
Hence the exact influence maximization algorithm is not incentive compatible
without payments. 

Now assume that we include the Clarke payment scheme
in this scenario and $h(v_{1}, \ldots, v_{i-1},v_{i+1}, \ldots, v_{n})=0$. If
node $j$ is chosen as the target set then agent $k$ will be able
to influence agent $m$ and she will get a monetary payment of $7$
units, thus total utility will be $8$ and if any other node is chosen
as the target set, then node $k's$ payoff will be lower. 

\noindent
{\em Note}: It may be noted that the greedy algorithm proposed by Kempe, Kleinberg,
and Tardos in \cite{KKT} is the same as the exact influence maximization
algorithm in the special case $|A| = 1$. Thus the above example also shows that the
greedy algorithm
is not incentive compatible. We can construct examples for other heuristic
based algorithms like the high degree heuristics \cite{KKT}, 
degree discount heuristics
\cite{Chen} etc., to illustrate that they are
not incentive compatible. All these algorithms use the information reported
by the agents directly to select the target set. Thus, none of these
algorithms are strategy-proof. However the randomized algorithm in
which we select the nodes in the target set uniformly randomly is
strategy-proof. But the expected influence of this algorithm is low
as shown by the experiments done in \cite{KKT}.


\section{Influencer - Influencee Model}
The obvious limitation of the {\em influencer model\/} discussed
above is the restricted assumption that the influencer alone
decides the influence probabilities.
In a real world social network, given a social connection between
two individuals, both the individuals will have information about different
aspects and properties of the connection. The influencer-influencee model
tries to leverage this fact in designing an incentive compatible mechanism
for eliciting influence probabilities.
An advantage of the mechanisms designed with this model is that 
the agents need not know any information beyond its neighborhood.
We now describe this model and propose a mechanism based on scoring rules
for truthfully eliciting influence probabilities.

\subsection{The Model}
\begin{itemize}
\item Given a directed edge $(i,j)$ in the social network, the social planner
will ask: 
\begin{itemize}
\item agent $i$ (the influencer) to report her influence probability
$\theta_{ij}$ on $j$ and  
\item  agent $j$ (the  influencee)
to report agent $i's$ influence on her. 
\end{itemize}
Thus the social
planner will ask each agent to reveal
the probability distribution over each edge which is incident on it
and which is emanating from it.
We can consider the activation probability on each edge as a probability
distribution over the set $\{active,inactive\}$.
\item Given these influence probabilities, the social planner will compute:
\begin{itemize}
\item the influence maximizing target set using an influence maximization
algorithm and 
\item the amount of discount
to be given to the agents based on their reported probability distribution
on edges using a scoring rule based approach that will be
described soon.
\end{itemize}
\item Consider an agent $i$. Let $out(i)=\{j|(i,j)\in E\}$ and
$in(i)=\{j|(j,i)\in E\}$. Thus agent $i$ acts as influencer
to nodes in the set $out(i)$ and acts as the influencee
for the nodes in set \\ $in(i)$. In this model an assumption
is that agent $i$ knows  the influence probabilities on the
edges that are incident on $i$ and that are emanating from $i$.
Thus agent only knows about the influence probabilities in its neighborhood
and nothing beyond that.
\item Also no agent knows what influence probability is reported by
the agents in its neighborhood. The only way an agent can predict
the reported probability by its neighbor is by her own assessment
of it. Thus we assume that for any given pair of nodes $i$ and $j$
having edge $(i,j)$ between them,  the conditional probability
distribution function $P(\theta_{ij}^{j}|\theta_{ij}^{i})$ which
has all the probability mass concentrated at $\theta_{ij}^{j}=\theta_{ij}^{i}$.
\item Here we discretize the continuous interval {[}0,1{]} into $\frac{1}{1+\epsilon}$
equally spaced numbers and agents will have to report the influence
probability by quoting one of the $\frac{1}{1+\epsilon}$ numbers. More concretely,
given set $T=\{1,2, \ldots, t\}$ we define $z\in\{0,\epsilon,2\epsilon,...,1\}^{t}$
such that $\sum_{i=1}^{t}z_{i}=1$. For the case of our problem, $T=\{active,inactive\}$,
thus agents will only have to report one number $\theta_{ij}\in\{0,\epsilon,2\epsilon,...,1\}$.
\end{itemize}
Based on this model we will now design a scoring rule based payment
schemes. 

\subsection{A Scoring Rule Based Mechanism}
We now consider a scoring rule based payment scheme in which we incentivize
agents for reporting the true probability distribution on each edge.
In this mechanism, the payment to an agent $i$ depends on the truthfulness
of the distribution she reveals on edges incident on $i$ as well
as on the edges emanating from $i$. 

First we state a lemma without proof.
The lemma quantifies the amount of loss that an agent suffers
by not reporting its true type.

\begin{lemma}
If $w,z\in\{0,\epsilon,2\epsilon,...,1\}^{t}$, $0<\epsilon\leq1$
such that $\sum_{i=1}^{t}w_{i}=1$ and $\sum_{i=1}^{t}z_{i}=1$ and
$z_{i}=w_{i}\pm\epsilon$ for at least one integer $1\leq i\leq t$, then
\begin{itemize}
\item For quadratic scoring rule \[
V(z|w) \leq V(w|w)-2\epsilon{}^{2}\]

\item For the spherical scoring rule\[
V(z|w) \leq V(w|w)-1.5\epsilon{}^{2}\]

\item For weighted scoring rule\[
V(z|w) \leq V(w|w)-\epsilon{}^{2}\]
\end{itemize}
\end{lemma}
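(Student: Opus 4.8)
The plan is to treat each of the three scoring rules separately, in every case rewriting the claimed upper bound on $V(z|w)$ as a lower bound on the expected score loss $L(z|w)=V(w|w)-V(z|w)$ and then exploiting the joint simplex-and-grid structure of the admissible reports. The first thing I would record is a purely combinatorial observation: since $w$ and $z$ both lie on the grid $\{0,\epsilon,\ldots,1\}^t$ and both sum to $1$, no single coordinate can change in isolation. Thus $z_i=w_i\pm\epsilon$ for one index forces a compensating change of total mass $\epsilon$ among the other coordinates, so $z$ and $w$ differ in at least two coordinates, and the smallest admissible deviation is $z=w+\epsilon(e_a-e_b)$ for some $a\neq b$. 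I would then argue that it suffices to verify each bound at such a minimal two-coordinate perturbation: for the two sum-of-squares rules below this is immediate (larger deviations only increase the loss), while for the spherical rule the corresponding monotonicity is itself part of the work.

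For the quadratic rule the computation is exact. Using $\sum_i w_i=1$ one gets $V(z|w)=2\langle w,z\rangle-\|z\|^2$ and $V(w|w)=\|w\|^2$, so $L(z|w)=\|w\|^2-2\langle w,z\rangle+\|z\|^2=\|z-w\|^2$. The minimal perturbation gives $\|z-w\|^2=2\epsilon^2$, and any other admissible $z$ only enlarges this squared distance, yielding $V(z|w)\le V(w|w)-2\epsilon^2$. The weighted rule is handled identically: the index weights survive the computation and produce the closed form $L(z|w)=\tfrac1t\sum_{i=1}^t i\,(z_i-w_i)^2$; evaluated at a minimal perturbation on coordinates $a,b$ this equals $\tfrac{(a+b)\epsilon^2}{t}$, which I would then minimize over the admissible index pair to extract the stated $\epsilon^2$ (taking care, since this form is $t$-dependent, to use the binary setting $T=\{\mathrm{active},\mathrm{inactive}\}$ that the mechanism actually invokes).

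The spherical rule is where the real work lies, and I expect it to be the main obstacle. There is no quadratic closed form here: writing $\hat z=z/\|z\|$ gives $V(z|w)=\langle w,\hat z\rangle=\|w\|\cos\phi$, with $\phi$ the angle between $w$ and $z$, so $L(z|w)=\|w\|\,(1-\cos\phi)$. My plan is to specialize to the minimal perturbation $z=w+\epsilon(e_a-e_b)$, substitute $\langle w,z\rangle=\|w\|^2+\epsilon(w_a-w_b)$ and $\|z\|^2=\|w\|^2+2\epsilon(w_a-w_b)+2\epsilon^2$, and bound the resulting expression below uniformly over admissible $w$. The difficulty is that this quantity depends on $w$ through both $\|w\|$ and the gap $w_a-w_b$ and is not a manifest multiple of $\epsilon^2$; extracting a clean constant requires a Taylor expansion in $\epsilon$ together with an optimization over $w$ to locate the worst (smallest-loss) distribution, which I expect to sit toward a near-deterministic vertex of the simplex rather than at the uniform report. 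Because the value of this minimum genuinely turns on that optimization rather than on any single algebraic identity, it is the step I would spend the most care on — in particular I would double-check whether the worst-case analysis actually delivers the stated coefficient $1.5$ or in fact a smaller sharp constant, and record whichever the argument supports.
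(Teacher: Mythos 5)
First, a point of reference: the paper itself offers no proof of this lemma --- it is introduced with the words ``First we state a lemma without proof'' --- so your proposal cannot be measured against a proof of record and has to stand on its own. For the quadratic and weighted rules it does. The closed forms $L(z|w)=\|z-w\|^{2}$ and $L(z|w)=\frac{1}{t}\sum_{i=1}^{t}i\,(z_{i}-w_{i})^{2}$ are correct, the grid-plus-simplex observation (at least two coordinates must change, each by at least $\epsilon$) is exactly the right combinatorial input, and your caveat about the weighted rule is a genuine defect of the statement as written: for $t\geq4$ a minimal perturbation on coordinates $1$ and $2$ (e.g.\ $w=(0,1,0,0)$, $z=(\epsilon,1-\epsilon,0,0)$) gives loss $3\epsilon^{2}/t<\epsilon^{2}$, so the weighted bound fails in general and survives only because the mechanism works with the binary set $T=\{\mathrm{active},\mathrm{inactive}\}$, i.e.\ $t=2$, where the loss is $1.5\epsilon^{2}\geq\epsilon^{2}$.

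Your instinct about the spherical rule is also right, but the situation is worse than you suspect: the stated bound is false, so the optimization you planned cannot deliver the coefficient $1.5$ for any $w$. The second-order expansion of $L$ at $z=w+\epsilon(e_{a}-e_{b})$ has coefficient $\frac{1}{2\|w\|}\bigl(2-\frac{(w_{a}-w_{b})^{2}}{\|w\|^{2}}\bigr)$; in the binary case $w=(p,1-p)$ this equals $\frac{1}{2}(2p^{2}-2p+1)^{-3/2}$, which ranges from $\frac{1}{2}$ at the vertices $p\in\{0,1\}$ (the near-deterministic worst case you predicted) up to only $\sqrt{2}\approx1.414$ at the uniform $p=\tfrac12$ --- never reaching $1.5$. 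Concrete counterexamples: with $\epsilon=0.1$, taking $w=(1,0)$, $z=(0.9,0.1)$ gives $L=1-0.9/\sqrt{0.82}\approx0.0061<0.015=1.5\epsilon^{2}$, and even the most favorable $w=(0.5,0.5)$, $z=(0.6,0.4)$ gives $L=\tfrac{1}{\sqrt{2}}-0.5/\sqrt{0.52}\approx0.0137<0.015$. So the honest conclusion your argument supports is a uniform constant of roughly $\tfrac12$ (not $1.5$) for the binary spherical rule; you should record that, and note that this error in the lemma does not damage the paper's Theorem 2, which invokes only the quadratic bound $2\epsilon^{2}$, but it does undercut the paper's later remark that the same payment scheme carries over verbatim to the spherical rule.
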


We develop the mechanism assuming the quadratic scoring
rule. A similar development will follow for other
proper scoring rules. In the proposed mechanism, the payment
received by an agent $i$ is given by 
\[ \left(v_{i}(A,\theta)+\frac{d_{i}^{2}}{2\epsilon^{2}}\right)\Biggl(\sum_{j\in in(i)}V_{ji}^{i}(\hat{\theta_{ji}^{j}}|\hat{\theta_{ji}^{i}})+
\sum_{j\in out(i)}V_{ij}^{i}(\hat{\theta_{ij}^{j}}|\hat{\theta_{ij}^{i}})\Biggr)\]
where $d_{i}$ is the degree of agent $i$, $V_{ij}^{i}()$ is the
expected score that agent $i$ gets for reporting the distribution
$\hat{\theta_{ij}^{i}}$ on the edge $(i,j)$. We are now in a position
to state and prove the main result of this paper. The theorem
specifically mentions quadratic scoring rule for the sake of convenience
but will hold for any proper scoring rule.
\begin{theorem}
\emph{Given the influencer-influencee model, reporting true probability distributions is a Nash equilibrium
in a scoring rule based mechanism with quadratic scoring rule.}\end{theorem}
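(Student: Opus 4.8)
The plan is to verify the Nash condition directly. I would fix an arbitrary agent $i$, assume every other agent reports truthfully, and argue that truthful reporting is then a best response for $i$. I treat the displayed payment expression as $i$'s utility and write it compactly as $U_i = (v_i + C)\,S_i$, where $C = d_i^{2}/(2\epsilon^{2})$ and $S_i$ is the total expected score $i$ accumulates over the edges incident to and emanating from it. The first point, which follows from the conditional-belief assumption that $P(\theta_{ij}^{j}\mid\theta_{ij}^{i})$ places all its mass at $\theta_{ij}^{j}=\theta_{ij}^{i}$, is that each per-edge term is the expected score of $i$'s own report evaluated against the true distribution on that edge (because each neighbour $j$ reports truthfully). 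By properness of the quadratic scoring rule, $S_i$ is maximised exactly when $i$ reports truthfully on every edge; I denote this maximum by $S_i^{T}$.

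Next I would quantify the two competing effects of a unilateral deviation by $i$. On the valuation side, $v_i$ is the expected number of neighbours $i$ activates, hence $0 \le v_i \le d_i$, so $|v_i^{T} - v_i^{D}| \le d_i$ for any deviation, irrespective of how the induced target set $A$ moves. On the score side, the score-loss Lemma stated above guarantees that any deviation altering at least one reported coordinate by $\pm\epsilon$ lowers that edge's expected score by at least $2\epsilon^{2}$, so $S_i^{D} \le S_i^{T} - 2\epsilon^{2}$; moreover each binary per-edge maximum is $V(w\mid w)=\lVert w\rVert^{2}\le 1$, whence $0 \le S_i^{T} \le d_i$.

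The final step combines these estimates through the product form. Using $S_i^{D}\le S_i^{T}-2\epsilon^{2}$ together with $v_i^{D}+C>0$, I would derive
\[
U_i^{T} - U_i^{D} \;\ge\; (v_i^{T} - v_i^{D})\,S_i^{T} + 2\epsilon^{2}\bigl(v_i^{D}+C\bigr).
\]
The choice of $C$ is exactly what closes the argument: since $2\epsilon^{2}C = d_i^{2}$, and since $(v_i^{T}-v_i^{D})S_i^{T}\ge -d_i\cdot d_i$ while $2\epsilon^{2}v_i^{D}\ge 0$, the right-hand side is at least $-d_i^{2}+d_i^{2}=0$. Thus $U_i^{T}\ge U_i^{D}$ for every deviation; as $i$ was arbitrary, truthful reporting is a Nash equilibrium.

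I expect the main obstacle to be precisely the coupling created by the product $(v_i+C)S_i$: a deviation perturbs the valuation and the score simultaneously, so one cannot invoke properness of the scoring rule in isolation. The crux is to show that the score penalty, amplified by $C=d_i^{2}/(2\epsilon^{2})$, dominates every possible valuation gain uniformly over all deviations and all graph structures, which is why the sharp $2\epsilon^{2}$ floor supplied by the Lemma (and not a weaker bound) is needed. The analogue for the spherical and weighted scoring rules would follow identically, replacing $C$ by $d_i^{2}/(1.5\epsilon^{2})$ and $d_i^{2}/\epsilon^{2}$ respectively to match their score-loss constants.
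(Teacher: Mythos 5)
Your proof is correct and follows essentially the same route as the paper: the same product decomposition $(v_i + d_i^2/(2\epsilon^2))\,S_i$, the same $2\epsilon^2$ score-loss floor from Lemma 2, and the same closing bounds $|v_i^T-v_i^D|\le d_i$, $S_i^T\le d_i$, with the cancellation $2\epsilon^2 C = d_i^2$; indeed your displayed inequality is algebraically identical to the paper's condition $d_i^2 + (v_i+\delta)2\epsilon^2 \ge \delta\beta$. The only (cosmetic) differences are that you bound arbitrary deviations directly rather than reducing to the single-edge minimal-deviation worst case, and you bound the valuation change by $|v_i^T-v_i^D|\le d_i$ rather than invoking the paper's rationality assumption $0\le\delta\le d_i$.
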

\begin{proof}
We will first consider the strategic behavior of any arbitrary agent
$i$ considering only the agents in the set $S=in(i)\cup out(i)$.
In the payment scheme, the agents belonging to the set $V\setminus S$
can only affect the valuation $v_{i}(A,\theta)$. The expected payoff
an agent $i$ gets is given by\\
\[
\sum_{j\in in(i)}\sum_{\theta_{ji}=0}^{1}P(\theta_{ji}^{j}|\theta_{ji}^{i})v_{i}(A,\theta)V_{ji}^{i}(\hat{\theta_{ji}^{j}}|\hat{\theta_{ji}^{i}})+\]
\[
\sum_{j\in out(i)}\sum_{\theta_{ij}=0}^{1}P(\theta_{ij}^{j}|\theta_{ij}^{i})v_{i}(A,\theta)V_{ij}^{i}(\hat{\theta_{ij}^{j}}|\hat{\theta_{ij}^{i}})+\]
\[
\frac{d_{i}^{2}}{2\epsilon^{2}}\Biggl(\sum_{j\in in(i)}\sum_{\theta_{ji}=0}^{1}P(\theta_{ji}^{j}|\theta_{ji}^{i})V_{ji}^{i}(\hat{\theta_{ji}^{j}}|\hat{\theta_{ji}^{i}})+\]
\[
\sum_{j\in out(i)}\sum_{\theta_{ij}=0}^{1}P(\theta_{ij}^{j}|\theta_{ij}^{i})V_{ij}^{i}(\hat{\theta_{ij}^{j}}|\hat{\theta_{ij}^{i}})\Biggr)\]
Note that the valuation $v_{i}(A,\theta)$ is dependent on the assessment
of the influence probabilities by agent $i$ in its neighborhood. 
Every agent will now try to maximize the expected payoff by considering
the strategies of agents in its neighborhood. If all the agents in the
neighborhood are truthful then we have\[
\sum_{j\in in(i)}\sum_{\theta_{ji}=0}^{1}P(\theta_{ji}|\theta_{ji}^{i})v_{i}(A,\theta)V_{ji}^{i}(\theta_{ji}|\hat{\theta_{ji}^{i}})+\]
\[
\sum_{j\in out(i)}\sum_{\theta_{ij}=0}^{1}P(\theta_{ij}|\theta_{ij}^{i})v_{i}(A,\theta)V_{ij}^{i}(\theta_{ij}|\hat{\theta_{ij}^{i}})+\]
 \[
\frac{d_{i}^{2}}{2\epsilon^{2}}\Biggl(\sum_{j\in in(i)}\sum_{\theta_{ji}=0}^{1}P(\theta_{ji}|\theta_{ji}^{i})V_{ji}^{i}(\theta_{ji}|\hat{\theta_{ji}^{i}})+\]
\[
\sum_{j\in out(i)}\sum_{\theta_{ij}=0}^{1}P(\theta_{ij}|\theta_{ij}^{i})V_{ij}^{i}(\theta_{ij}|\hat{\theta_{ij}^{i}})\Biggr)\]
Now consider the expression
\[
\left(v_{i}(A,\theta)+\frac{d_{i}^{2}}{2\epsilon^{2}}\right)\Biggl(\sum_{j\in in(i)}V_{ji}^{i}(\theta_{ji}|\hat{\theta_{ji}^{i}})+ \sum_{j\in out(i)}V_{ij}^{i}(\theta_{ij}|\hat{\theta_{ij}^{i}})\Biggr)\]
Now Let \[ \beta=\sum_{j\in in(i)}V_{ji}^{i}(\theta_{ji}|\theta_{ji})+ 
\sum_{j\in out(i)}V_{ij}^{i}(\theta_{ij}|\theta_{ij})\]
Thus $\beta$ is the expected score agent $i$ gets when she reports
the true distribution over all the edges. Also, by probability mass
assumption, this is the expected score she will receive when she reports
truthfully. 

Thus, if an agent is truthful then, she will receive a payoff given
by\[
\left(v_{i}(A,\theta)+\frac{d_{i}^{2}}{2\epsilon^{2}}\right)\beta\]
Let $v_{i}(A,\theta)$ be the true valuation of an agent and $v'_{i}(A,\theta)$
the valuation when agent lies. That is when agent reports some
$\hat{\theta_{ij}^{i}}\neq\theta_{ij}$. Consider the utility of an
agent when she lies on only one of the edges\[
\left(v'_{i}(A,\theta)+\frac{d_{i}^{2}}{2\epsilon^{2}}\right)\left(\beta-2\epsilon^{2}\right)\]
When an agent reports a probability value that is $\pm\epsilon$ away
from the true probability value, the quadratic scoring rule (Lemma 2) ensures that
the agent gets payoff lower by $2\epsilon^{2}$ for that edge. Now, we
assume the worst case scenario in which agent gains maximum by reporting
the false probability value on only a single edge and that too minimum
possible deviation from the true value, Since we divide the {[}0,1{]}
probability interval into $1/\epsilon$ numbers, agent $i$  has to report
the probability value that is at least $\epsilon$ away from the true
probability value. That is, agent $i$ will have to report some probability
value $\hat{\theta_{ij}^{i}}=\theta_{ij}\pm\epsilon$. This gives
us $\left(\beta-2\epsilon^{2}\right)$. 

An agent cannot get a higher score for lying, as the quadratic scoring rule is incentive
compatible. Thus the agent can gain only in the valuation part. Now by
reporting a false probability distribution, the agent can get a valuation greater than true
valuation by say $\delta$ that is,\[
v'_{i}(A,\theta)=v_{i}(A,\theta)+\delta\]
where $\delta\in[0,\sum_{j\in out(i)}\theta_{ij}-v_{i}(A,\theta)]$. 
Thus by reporting a false probability value, an agent gets a payoff of\[
\left(v_{i}(A,\theta)+\delta+\frac{d_{i}^{2}}{2\epsilon^{2}}\right)\left(\beta-2\epsilon^{2}\right)\]

Thus for the agent to remain truthful we require that\[
\left(v_{i}(A,\theta)+\frac{d_{i}^{2}}{2\epsilon^{2}}\right)\beta\geq\left(v_{i}(A,\theta)+\delta+\frac{d_{i}^{2}}{2\epsilon^{2}}\right)\left(\beta-2\epsilon^{2}\right)\]
\[
\Longrightarrow d_{i}^{2}+\left(v_{i}(A,\theta)+\delta\right)2\epsilon^{2}\geq\delta\beta\]

As the agents are rational $d_{i}\geq\delta\geq0$. Also, for the quadratic scoring rule the maximum possible expected payoff for each edge will be at most 1. Thus, $\beta\leq d_{i}$.
Now, we have $\delta\beta\leq d_{i}^{2}$ and $\left(v_{i}(A,\theta)+\delta\right)2\epsilon^{2}\geq0$.
Thus, it is a best response strategy for an agent to report truthfully
when the agents in its neighborhood are truthful. 

We still have to resolve the case for agents belonging to the
set $V\setminus S$. Note that these agents only affect the valuation of
agent $i$. In the above analysis, the best response strategy for an
agent was derived by assuming that with minimum possible deviation
from reporting true probability values, agent $i$ is able to gain
maximum possible valuation. Thus even if agents in $V\setminus S$
report any arbitrary values, a best response strategy for agent $i$ is to report truthfully.
Thus without knowing the strategy of agents in set $V\setminus S$,
the best response for an agent $i$ is to report truthfully. Thus
reporting true influence probabilities is a Nash equilibrium in this
mechanism. 
\end{proof}
Note that the social planner will have to choose the value of $\epsilon$
which will decide the accuracy of the probability values extracted
from the users. Smaller the value of $\epsilon$, greater the payment
the seller will have to make to the users. The main advantage of this
mechanism is that the seller can use any of the algorithms to select
the target set. The agents will be truthful regardless of which target
set is chosen. 

A similar payment scheme will work with other scoring rules namely
logarithmic, spherical, and weighted scoring rule. We omit this
due to constraints of space.

\subsection{The Reverse Weighted Scoring Rule}
Standard proper scoring rules such as quadratic, logarithmic, spherical,
and weighted scoring rules have a serious limitation in the
current context.  If the influence
probability on an edge is zero, all  these scoring rules will give
an expected score of 1. Thus, if the social network is the
empty graph in which all the edges are inactive, these standard payment
schemes will give maximum possible expected score. We now propose
the following {\em reverse weighted scoring rule\/} to overcome the
above limitation:
\[ S_{i}(z)=2z_{i}(t-i)-\sum_{j=1}^{t}z_{j}^{2}(t-j)\]
We now show that this scoring rule is proper.

\begin{lemma}
The reverse weighted scoring rule is a proper scoring rule.
\end{lemma}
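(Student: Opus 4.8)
The plan is to reduce the expected score loss $L(z|w)$ to a transparent closed form and then show it is strictly positive whenever $z \neq w$. First I would substitute the definition $S_i(z) = 2z_i(t-i) - \sum_{j=1}^t z_j^2(t-j)$ into $V(z|w) = \sum_{i=1}^t w_i S_i(z)$. The crucial observation is that the subtracted term $\sum_{j=1}^t z_j^2(t-j)$ carries no dependence on $i$, so it factors out of the outer sum and is multiplied by $\sum_{i=1}^t w_i = 1$. This collapses the expression to
\[ V(z|w) = 2\sum_{i=1}^t w_i z_i (t-i) - \sum_{i=1}^t z_i^2 (t-i). \]

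Next I would specialize to $z=w$, which gives $V(w|w) = \sum_{i=1}^t w_i^2 (t-i)$, and then subtract. Grouping the three resulting sums by index completes a perfect square, yielding
\[ L(z|w) = V(w|w) - V(z|w) = \sum_{i=1}^t (t-i)\,(w_i - z_i)^2. \]
This is the heart of the matter: properness is now equivalent to showing this weighted sum of squares is strictly positive for $z \neq w$.

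The hard part --- indeed the only genuine subtlety --- will be that the weight $(t-i)$ vanishes at $i=t$, so $L(z|w)$ is only manifestly a sum of \emph{nonnegative} terms and is not obviously strictly positive. I would resolve this using the simplex constraint $\sum_{i=1}^t w_i = \sum_{i=1}^t z_i = 1$. If $z \neq w$, they must in fact differ in some coordinate $i \in \{1,\ldots,t-1\}$: were $w_i = z_i$ to hold for every $i < t$, then $w_t - z_t = \sum_{i=1}^t (w_i - z_i) = 0$ as well, forcing $z = w$ and contradicting $z \neq w$. Hence some index $i < t$ has both $(t-i) > 0$ and $(w_i - z_i)^2 > 0$, contributing a strictly positive summand while every other term is nonnegative, so $L(z|w) > 0$ and the rule is proper. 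I would note that the argument tacitly uses $t \geq 2$, which holds in our setting (the outcome set is active/inactive, so $t=2$); for $t=1$ the simplex is a single point and properness is vacuous.
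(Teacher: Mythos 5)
Your proposal is correct and follows essentially the same route as the paper: substitute the rule into $V(z|w)$, use $\sum_{i=1}^{t} w_i = 1$ to collapse the term that is independent of the outer index, and arrive at $L(z|w) = \sum_{i=1}^{t}(t-i)(w_i - z_i)^2$ (the paper's version carries a harmless extra factor of $1/t$ inherited from the weighted scoring rule's normalization). Where you go beyond the paper is the final step: the paper simply asserts that since $i \leq t$ the loss ``is always positive,'' overlooking that the weight $(t-i)$ vanishes at $i = t$, so the displayed sum is only manifestly nonnegative; your simplex argument --- that $z \neq w$ forces disagreement in some coordinate $i < t$, since agreement on all $i < t$ together with $\sum_i (w_i - z_i) = 0$ would force $w_t = z_t$ --- is exactly the justification the paper omits, and your remark that this needs $t \geq 2$ is also apt (it holds here, as $T = \{\text{active}, \text{inactive}\}$).
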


\begin{proof}
Consider the expression for the expected score of reverse weighted
scoring rule, given distributions z and w:\[
V(z|w)=\sum_{i=1}^{t}w_{i}S_{i}(z)\]
\[
=\frac{2\sum_{i=1}^{t}w_{i}z_{i}(t-i)-\sum_{i=1}^{t}z_{i}^{2}(t-i)}{t}\]
\[
V(w|w)=\sum_{i=1}^{t}w_{i}S_{i}(w)\]
\[
=\frac{2\sum_{i=1}^{t}w_{i}^{2}(t-i)-\sum_{i=1}^{t}w_{i}^{2}(t-i)}{t}\]
\[
=\frac{2\sum_{i=1}^{t}w_{i}^{2}(t-i)}{t}\]
\[
L(z|w)=V(w|w)-V(z|w)\]
\[
=\frac{\sum_{i=1}^{t}(w_{i}-z_{i})^{2}(t-i)}{t}\]
since, $i\leq t$ the loss is always positive. Thus, reverse weighted
scoring rule is proper.
\end{proof}

It can also be shown that the the reverse weighted scoring rule 
also satisfies the following desirable properties: 
\begin{enumerate}
\item The expected score is proportional to the influence probability.
\item If $\theta_{ij}=0$ then the expected score for the edge $(i,j)$
to both the agents $u$ and $v$ is zero. That is, $V_{ij}^{i}(\theta_{ij}^{j}|\theta_{ij}^{i})=V_{ij}^{j}(\theta_{ij}^{i}|\theta_{ij}^{j})=0$
if $\theta_{ij}=0$.
\end{enumerate}
Property 1 is desirable because the social planner would want to reward
the agent which revealed the social connection through which the product
can be sold with high probability. Property 2 ensures that an agent
does not get anything for revealing a social connection through which
the product cannot be sold.

\section{Implementation in  Viral Marketing Scenarios}
Since the mechanisms presented under the influencer-influencee
model involve payments, the influence maximization process will
involve monetary transfers.  We now discuss the implementation of the mechanism in the context
of viral marketing in an online social network like facebook, orkut,
etc. 

Consider that a seller wants to market a certain product through a
given social
network. The seller can now ask each user in the social network to
reveal her influence on each of the users in their friends list. Users
have incentive to participate in this mechanism because each user
will get a certain positive payment based on the influence probabilities
they report. The seller can ask each agent to report the influence
probability by developing the application on the social network. There
are a large number of applications on the social networks like orkut,
facebook etc. which users use extensively for playing games and socializing
online. 

In an online social network like facebook, for example, if a user is
interested in the product to be sold, then she can grant the access
to the application. Now, this application will have full access to
the friends list and other profile information of the user that is
public. Thus such an application can be implemented in an online
social network without any privacy issues. The seller will first have
to fix the level of accuracy that she needs before starting the information
extraction process. 

Now, given the influence probabilities, the application will compute
the influence maximizing target set. The application will also compute
the payment to be made to the user. This payment can be made in the
form of a discount (up to 100 percent) on the product or discount coupons
or complimentary gifts or a combination of these.
The marketing and sale of the product will now proceed as in the independent
cascade process and users will get appropriate discounts based on
the reported influence probabilities. Since the payment scheme ensures
incentive compatibility, every user will truthfully reveal the information
to the seller via the application.

\section{Summary and Future Work}
In this paper, we have proposed  mechanisms for
eliciting influence probabilities truthfully in a social network.
Influence maximization in general and viral marketing
in particular are the immediate applications.
The work opens up several interesting questions:
\begin{itemize}
\item In this model we assumed that the influence probability is known exactly to the agents. We can relax this assumption and assume that agents 
know the belief probability rather than exact influence probability.
\item In the influencer model, does there exist an incentive compatible
algorithm having a constant factor approximation ratio with or without
payments?
\item In the influencer model, does there exist a heuristic based incentive
compatible algorithm which may not have theoretical guarantees about
the approximation but in a practical sense it performs well?
\item In the influencer-influencee model, the payments depend on $\epsilon$ which
decides the accuracy of the probability distribution. The higher the
accuracy is required, the higher is the payment to be made to the user.
An interesting direction of future research would be to design 
incentive compatible mechanisms that are independent of this factor. 
\end{itemize}

%
\bibliographystyle{abbrv}
\bibliography{paper}  

\begin{thebibliography}{10}

\bibitem{Procaccia}
N.~Alon, F.~Fischer, A.~Procaccia, and M.~Tennenholtz.
\newblock Sum of us: Strategyproof selection from the selectors.
\newblock {\em Working Paper}, 2010.

\bibitem{Altman}
A.~Altman and M.~Tennenholtz.
\newblock Incentive compatible ranking systems.
\newblock {\em Proceedings of the 6th International Conference on Autonomous
  agents and Multiagent Systems, AAMAS, article no. 5}, 2007.

\bibitem{Chen}
W.~Chen, Y.~Wang, and S.~Yang.
\newblock Efficient influence maximization in social networks.
\newblock {\em Proceedings of the 15th ACM SIGKDD Conference on Knowledge
  Discovery and Data mining, KDD}, pages 199--208, 2003.

\bibitem{CLARKE}
E.~Clarke.
\newblock Multi-part pricing of public goods.
\newblock {\em Public Choice}, 11(1):17--23, 1971.

\bibitem{Dixit}
D.~Dixit and Y.~Narahari.
\newblock Quality concious and truthful query incentive networks.
\newblock {\em 5th Workshop on Internet and Network Economics, WINE}, pages
  386--397, 2009.

\bibitem{Domingos}
P.~Domingos and M.~Richardson.
\newblock Mining the network value of customers.
\newblock {\em Proceedings of the 7th ACM SIGKDD Conference on Knowledge
  Discovery and Data mining, KDD}, pages 47--56, 2001.

\bibitem{Goyal}
A.~Goyal, F.~Bonchi, and L.~Lakshmanan.
\newblock Learning influence probabilities in social networks.
\newblock {\em Proceedings of The Third ACM International Conference on Web
  Search and Data Mining, WSDM}, pages 241--250, 2010.

\bibitem{KKT}
D.~Kempe, J.~Kleinberg, and E.~Tardos.
\newblock Maximizing spread of influence through a social network.
\newblock {\em Proceedings of the 9th ACM SIGKDD Conference on Knowledge
  Discovery and Data mining, KDD}, pages 137--146, 2003.

\bibitem{LESKOVEC}
J.~Leskovec, L.~Adamic, and B.~Huberman.
\newblock The dynamics of viral marketing.
\newblock {\em ACM Transactions on the Web}, 1(1), 2007.

\bibitem{MASCOLELL}
A.~Mas-Colell, M.~Whinston, and J.~Green.
\newblock Oxford University Press, New York, 1995.

\bibitem{Selten}
R.~Selten.
\newblock Axiomatic characterization of the quadratic scoring rule.
\newblock {\em Discussion Paper Serie B 390, University of Bonn, Germany},
  1996.

\end{thebibliography}
\end{document}